\documentclass[11pt,a4paper]{article}

\usepackage[utf8]{inputenc}
\usepackage[T1]{fontenc}
\usepackage[margin=1in]{geometry}
\usepackage{amsmath,amssymb,amsthm}
\usepackage{graphicx}
\usepackage{booktabs}
\usepackage{hyperref}
\usepackage{xcolor}
\usepackage{enumitem}
\usepackage{algorithm}
\usepackage{algpseudocode}
\usepackage{caption}
\usepackage{tabularx}
\usepackage{url}
\usepackage{tikz}
\usetikzlibrary{arrows.meta,positioning,shapes.geometric,fit,calc}

\newtheorem{definition}{Definition}[section]
\newtheorem{axiom}{Axiom}[section]
\newtheorem{invariant}{Invariant}[section]

\newtheorem{corollary}{Corollary}[section]
\newtheorem{remark}{Remark}[section]

\newcommand{\PunkGo}{\textsc{PunkGo}}

\newcommand{\actors}{\ensuremath{\mathcal{P}}}
\newcommand{\actorsh}{\ensuremath{\mathcal{P}_H}}
\newcommand{\actorsa}{\ensuremath{\mathcal{P}_A}}
\newcommand{\targets}{\ensuremath{\mathcal{T}}}
\newcommand{\logL}{\ensuremath{\mathcal{L}}}

\hypersetup{
  colorlinks=true,
  linkcolor=blue!70!black,
  citecolor=blue!70!black,
  urlcolor=blue!70!black
}

\title{\textbf{Right to History: A Sovereignty Kernel\\for Verifiable AI Agent Execution}}

\author{
  Jing Zhang\\
  Independent Researcher\\
  PunkGo Project (\url{https://punkgo.ai})\\
  \texttt{feijiu@punkgo.ai}
}

\date{February 2026}

\begin{document}
\maketitle

\begin{abstract}
AI agents increasingly act on behalf of humans, yet no existing system provides a tamper-evident, independently verifiable record of what they did. As regulations such as the EU AI Act begin mandating automatic logging for high-risk AI systems, this gap carries concrete consequences---especially for agents running on personal hardware, where no centralized provider controls the log. Extending Floridi's informational rights framework from data about individuals to actions performed on their behalf, this paper proposes the \emph{Right to History}: the principle that individuals are entitled to a complete, verifiable record of every AI agent action on their own hardware. The paper formalizes this principle through five system invariants with structured proof sketches, and implements it in \PunkGo{}, a Rust sovereignty kernel that unifies RFC~6962 Merkle tree audit logs, capability-based isolation, energy-budget governance, and a human-approval mechanism. Adversarial testing confirms all five invariants hold. Performance evaluation shows sub-1.3\,ms median action latency, $\sim$400 actions/sec throughput, and 448-byte Merkle inclusion proofs at 10{,}000 log entries.
\end{abstract}

\section{Introduction}
\label{sec:intro}

AI agents are executing an expanding range of actions on behalf of humans: sending emails, modifying files, running code, calling external APIs. Yet no system today can produce a cryptographic proof of what was done, who authorized it, and whether the agent stayed within its declared bounds. OS-level audit mechanisms (auditd, eBPF tracing) operate at syscall granularity and cannot attribute actions to agent-level semantics or support human-in-the-loop governance; transparent log systems (Certificate Transparency~\cite{rfc6962}, Trillian~\cite{trillian}) provide the cryptographic tooling but have never been applied to agent action recording. As regulations such as the EU AI Act~\cite{eu-ai-act} and the NIST AI Risk Management Framework~\cite{nist-ai-rmf} begin mandating automatic logging and transparency for high-risk AI systems, this absence of verifiable agent history is no longer merely a design gap---it is a regulatory liability.

The problem is especially acute when AI agents run on personal hardware. Current regulatory frameworks implicitly assume a centralized provider controlling a log server; when users run agents on their own machines, there is no cloud, no provider, no centralized log. Who guarantees that records exist, are complete, and have not been tampered with?

This paper argues that this constitutes a missing right. Floridi's information ethics framework~\cite{floridi-ethics} establishes that informational entities possess rights related to the creation, processing, and use of information about them---rights that existing legal instruments have begun to codify through data access and transparency obligations~\cite{gdpr,eu-ai-act}. This paper extends this line of reasoning from \emph{data about individuals} to \emph{computational actions performed on their behalf}: if individuals have a right to access data held about them, they should equally have a right to access a verifiable record of what AI agents \emph{did} on their behalf. The \emph{Right to History} is the principle asserting that individuals are entitled, on their own hardware, to a complete, verifiable, tamper-evident record of every AI agent action---not an optional feature but an architectural commitment that shapes the system's structure.

Existing work touches different facets of this problem but none provides a complete answer. Agent Contracts~\cite{agent-contracts} formalize what an agent \emph{is allowed} to do---but do not record what it \emph{actually did}. AIOS~\cite{aios} builds an agent operating system---but places the Large Language Model (LLM), rather than a verified kernel, as the trusted computing base (TCB), and provides no audit log. Certificate Transparency~\cite{rfc6962} demonstrates that Merkle tree append-only logs can protect billions of TLS certificates---but has never been applied to AI agent actions. seL4~\cite{sel4} proves that an OS kernel can formally verify capability isolation---but does not record what happens within those isolation boundaries.

This paper presents \PunkGo{}, a \emph{sovereignty kernel} that fills the gap between these bodies of work. The contributions are:

\begin{description}[leftmargin=1.5em,labelindent=0em]
\item[C1. Formal Framework.] A formal model for AI agent action auditing comprising three primitives (Actor, Energy, State), four atomic actions (\texttt{observe}, \texttt{create}, \texttt{mutate}, \texttt{execute}), and five system invariants (Append-Only, Completeness, Integrity, Boundary Enforcement, Energy Conservation), with proof sketches for each invariant following a structured pre-/post-condition argument style. Unlike Agent Contracts~\cite{agent-contracts}, which specify what agents may do, this framework argues that what agents actually did is completely recorded and tamper-evident.

\item[C2. PunkGo Kernel.] A Rust implementation that unifies RFC~6962-style Merkle tree audit logs, capability-based writable-target boundaries, and energy budget governance (including a \texttt{hold\_on} human-approval mechanism) within a kernel-level TCB. Unlike AIOS~\cite{aios}, which places the LLM as TCB, \PunkGo{} places a \emph{verified kernel} as TCB.

\item[C3. Sovereignty Inversion.] A directional asymmetry in the AI sovereignty literature: existing work asks how agents maintain sovereignty \emph{from} humans~\cite{sovereign-agents}; the complementary question---how humans maintain sovereignty \emph{over} agents---has received less attention. The Right to History addresses the latter.

\item[C4. Proof-of-Concept Validation.] Invariant satisfiability is verified on the implemented kernel and key performance characteristics are measured: sub-1.3\,ms median action latency, $\sim$400 actions/sec throughput, and 448-byte Merkle inclusion proofs at 10{,}000 log entries.
\end{description}

The remainder of this paper is organized as follows. \S\ref{sec:background} provides background and motivation. \S\ref{sec:threat} defines the threat model. \S\ref{sec:formal} presents the formal framework (core contribution). \S\ref{sec:design} describes the system design. \S\ref{sec:eval} reports evaluation results. \S\ref{sec:related} discusses related work. \S\ref{sec:discussion} addresses limitations and future work. \S\ref{sec:conclusion} concludes.

\section{Background and Motivation}
\label{sec:background}

\subsection{The Accountability Gap in the Age of AI Agents}

AI agents are transitioning from experimental tools to everyday executors. Users delegate file management, communications, external service calls, and even arbitrary code execution to LLM-driven agents. This delegation creates a fundamental accountability problem: when actions are driven by an unpredictable LLM, how can humans always (a)~know what happened, (b)~prove to third parties what happened, and (c)~verify that actions stayed within authorized bounds?

Existing agent frameworks fail to answer this question. AIOS~\cite{aios} provides agent scheduling and context management, but its access control is a rudimentary permission-group hashmap with no audit log and no cryptographic evidence. LangGraph offers an \texttt{interrupt()} pause mechanism that produces no verifiable history. The OpenAI SDK's \texttt{needsApproval} flag is a boolean switch with no economic constraint against abuse.

The pattern is clear: \emph{academia and industry have built the policy layer but not the evidence layer}. Agent Contracts~\cite{agent-contracts} define bounds, AgentSentry~\cite{agentsentry} generates ephemeral permissions, Governance-as-a-Service~\cite{governanceaas} enforces runtime policies---but none of these systems produce a persistent, tamper-evident, independently verifiable action record.

\subsection{The Transparent Log Paradigm}

The cryptographic community has solved a structurally analogous problem. Certificate Transparency (CT)~\cite{rfc6962} uses Merkle tree append-only logs to protect the TLS certificate ecosystem: every certificate issuance is recorded in a public log, inclusion proofs are $O(\log n)$, and consistency proofs guarantee that the log has not forked. CT now protects billions of certificates, and its Merkle tree structure has been extended to Key Transparency (CONIKS~\cite{coniks}), Go module mirrors (Trillian), and general-purpose verifiable data structures.

Crosby and Wallach~\cite{crosby-wallach} formalized data structures for tamper-evident logging, demonstrating that a log of 80~million events requires only 3\,KB of proof. Chun et~al.'s Attested Append-Only Memory (A2M)~\cite{a2m} goes further, embedding an append-only primitive in the TCB, allowing PBFT to tolerate $1/2$ Byzantine faults (the classical bound is $1/3$) with only $\sim$4\% overhead.

These works establish the paradigm's feasibility and efficiency. \emph{Yet no one has applied the transparent log paradigm to OS-level AI agent action auditing.} CT records certificates; CONIKS records keys; \PunkGo{} records agent actions.

\subsection{Regulatory Urgency}

The EU AI Act~\cite{eu-ai-act} creates mandatory demand for auditable AI infrastructure. Article~12 requires that high-risk AI systems ``shall technically allow for the automatic recording of events (`logs')''; Article~14 requires human oversight capability, including the ability to intervene in or interrupt operation; Article~19 specifies a minimum six-month log retention period. Penalties for non-compliance reach 7\% of global annual turnover or EUR~35~million. High-risk system requirements become enforceable on August~2, 2026.

However, Article~12 implicitly assumes a centralized provider scenario. For AI agents running on personal hardware---a rapidly growing use case---there is no provider controlling the log. The user simultaneously serves as provider, deployer, and overseer. \PunkGo{} enables individuals to self-comply with Articles~12 and~14, reframing logging from a compliance burden into a user right.

\subsection{From Data Sovereignty to Computational History Sovereignty}

Existing work on data sovereignty~\cite{local-first,solid} establishes principles for user data ownership---but does not address \emph{computational history ownership}. You can own your data, but can you prove what an agent \emph{did} with your data? The Right to History (\S\ref{sec:intro}) fills this gap: extending the trajectory from data ownership to action-record ownership. You need not control every agent decision (that is the enemy of efficiency), but you should be entitled to a complete, verifiable, tamper-evident record of every agent action on your hardware.

\section{Threat Model and Assumptions}
\label{sec:threat}

\subsection{System Boundary and Trust Model}

\PunkGo{}'s trust model places the kernel as the TCB. All other components---including AI agents (LLMs), external tools, and the communication layer---are treated as untrusted.

\paragraph{Inside the TCB (trusted).} The \PunkGo{} Kernel: event log (Merkle tree), energy ledger, boundary checker, payload validator, and audit prover.

\paragraph{Outside the TCB (untrusted).} AI agents (LLM-driven), external tools, client interfaces (remote consoles), and the network communication layer.

\paragraph{Trust Assumptions.}

\begin{description}[leftmargin=1.5em,labelindent=0em]
\item[A1 (Kernel Trusted).] The kernel code correctly implements the protocol described in this paper. This is the standard TCB assumption, consistent with the pre-verification premise of seL4~\cite{sel4}. Rust's memory safety guarantees eliminate entire vulnerability classes (buffer overflows, use-after-free) without runtime overhead; the implementation uses zero \texttt{unsafe} blocks in the core pipeline. Machine-checked verification (using Verus~\cite{verus}) is future work.

\item[A2 (Hardware Trusted).] The underlying processor and storage correctly execute instructions and persist data. This assumption can be further strengthened via TEEs (e.g., Intel SGX/TDX), listed as future work.

\item[A3 (Agent Untrusted).] This is the core design premise. LLM-driven agents are non-deterministic, not formally verifiable, and may produce arbitrary output. The entire system design revolves around this point: the design does not trust agent behavior, but trusts the kernel's record of agent behavior.

\item[A4 (Cryptographic Assumption).] The hash function used (SHA-256) is collision-resistant.
\end{description}

\subsection{Adversary Model}

The threat model considers five attack scenarios:

\begin{description}[leftmargin=1.5em,labelindent=0em]
\item[T1 (Privilege Escalation).] An agent attempts to operate on targets outside its declared \texttt{writable\_targets}---for example, a read-only agent attempting a write.

\item[T2 (Budget Exhaustion).] An agent repeatedly triggers \texttt{hold\_on} approvals to lock energy, starving other operations of resources.

\item[T3 (History Tampering).] A compromised non-kernel component (Level~1 adversary) attempts to modify or delete committed events in the log via the kernel API. Direct database manipulation by a root-privileged adversary (Level~2) is outside the current threat model.

\item[T4 (Equivocation / Log Forking).] A compromised component attempts to present different versions of history to different verifiers. Full defense requires external root comparison (see Scope of Defense below and Corollary~4.1).

\item[T5 (Approval Bypass).] An agent attempts to execute high-risk operations without triggering \texttt{hold\_on} checks.
\end{description}

\paragraph{Scope of Defense.}
It is essential to distinguish two levels of adversary:

\emph{Level~1: Untrusted agents (the primary threat).} Agents operate as unprivileged actors within the kernel's jurisdiction. The five invariants (INV-1--5) provide full defense at this level: agents cannot escape capability boundaries, exhaust energy beyond allocation, tamper with committed logs, or bypass human approval. All guarantees in this paper apply to Level~1 adversaries.

\emph{Level~2: Privileged platform adversary.} A user (or attacker) with root access to the host OS can stop the kernel process, directly modify the underlying storage, and reconstruct the Merkle tree. \PunkGo{} does not currently defend against Level~2 adversaries. Assumptions A1 (kernel trusted) and A2 (hardware trusted) explicitly exclude this case. Defense at Level~2 requires hardware-rooted trust: running the kernel inside a Trusted Execution Environment (TEE) such as Intel SGX/TDX, combined with remote attestation and external root anchoring (analogous to Certificate Transparency's gossip protocol). TEE integration is listed as future work (\S\ref{sec:discussion}).

This two-level distinction is consistent with systems security practice: seL4~\cite{sel4} proves isolation properties under analogous TCB assumptions without defending against physical hardware attacks; Certificate Transparency~\cite{rfc6962} assumes log servers are honest but detects equivocation when they are not, via external gossip---a mechanism \PunkGo{} does not yet implement but whose architectural slot (the Audit Prover's exported roots) is already in place.

\subsection{Security Goals}

Table~\ref{tab:security-goals} maps each security goal to the threats it addresses and the invariant that provides the guarantee.

\begin{table}[t]
\centering
\caption{Security goals, threats, and corresponding invariants.}
\label{tab:security-goals}
\small
\begin{tabularx}{\columnwidth}{lXcc}
\toprule
\textbf{Goal} & \textbf{Description} & \textbf{Threat} & \textbf{Inv.} \\
\midrule
G1 & Every capability-gated action is recorded & T3 & INV-2 \\
G2 & Committed events cannot be modified or deleted & T3 & INV-1 \\
G3 & Third parties can independently verify event existence and log consistency & T4 & INV-3 \\
G4 & Agents cannot act beyond declared \texttt{writable\_targets} & T1, T5 & INV-4 \\
G5 & Agents cannot consume energy beyond their allocation & T2 & INV-5 \\
\bottomrule
\end{tabularx}
\end{table}

\section{Formal Framework}
\label{sec:formal}

This section presents the core theoretical contribution. It first defines the system model (\S\ref{sec:world}), then formalizes the event log (\S\ref{sec:log}), capability model (\S\ref{sec:cap}), energy model (\S\ref{sec:energy}), and hold mechanism (\S\ref{sec:hold}), and finally states and argues five system invariants (\S\ref{sec:invariants}).

\subsection{System Model}
\label{sec:world}

\begin{definition}[World]
A \PunkGo{} system is a triple $W = (\actors, E, S)$, where $\actors$ is a finite set of actors, $E: \actors \to \mathbb{R}_{\geq 0}$ is the energy function, and $S = (\logL, W, \mathcal{V})$ is the system state comprising the event log $\logL$, the writability map $W: \actors \to 2^{\mathit{Pattern} \times \mathit{ActionType}}$, and a set of active envelopes $\mathcal{V}$.
\end{definition}

\begin{definition}[Actor Partition]
$\actors = \actorsh \cup \actorsa$, where $\actorsh$ is the set of human actors and $\actorsa$ is the set of agent actors. $\actorsh \cap \actorsa = \emptyset$ (exhaustive and mutually exclusive).
\end{definition}

\begin{axiom}[Human Unconditional Existence]
$\forall\, p \in \actorsh$: $p$ cannot be deleted or demoted.
\end{axiom}

\begin{axiom}[Agent Conditional Existence]
$\forall\, a \in \actorsa$: $a$ must declare $\mathrm{creator}(a) \in \actors$ and $\mathrm{purpose}(a)$ at creation time. $\mathrm{creator}(a)$ may set $\mathrm{expiry}(a)$.
\end{axiom}

\begin{definition}[Atomic Action]
An action $A$ is a 5-tuple $A = (\mathit{actor}, \mathit{type}, \mathit{target}, \mathit{payload}, \mathit{timestamp})$, where $\mathit{actor} \in \actors$, $\mathit{type} \in \{\texttt{observe}, \texttt{create}, \texttt{mutate}, \texttt{execute}\}$, $\mathit{target} \in \targets$ (the target object identifier), $\mathit{payload}$ is the action parameter, and $\mathit{timestamp} \in \mathbb{N}$ (nanoseconds since epoch; the implementation uses 64-bit unsigned integers).
\end{definition}

\begin{definition}[State Transition]
The system advances by actions, not by time:
\[
S' = f(S, A, E) \qquad E' = E - \mathrm{cost}(A)
\]
where $f$ is a deterministic state transition function and $\mathrm{cost}: \mathit{ActionType} \to \mathbb{R}_{\geq 0}$.
\end{definition}

\begin{definition}[Sole Committer]
\label{def:committer}
There exists a unique Committer role $\kappa$ responsible for transforming actions into committed events. $\kappa$ is a structural role, not a moral authority.
\end{definition}

\begin{remark}
The sole-committer design is deliberate: single-point linearization yields a deterministic total-order log. This aligns with Lamport's total-order broadcast~\cite{lamport} but simplifies to in-process serialization in the single-machine setting. Multi-committer extension (requiring consensus) is future work.
\end{remark}

\subsection{Event Log}
\label{sec:log}

\begin{definition}[Event]
An event $e$ is a tuple $e = (\mathit{id}, \mathit{seq}, \mathit{actor}, \mathit{type}, \mathit{target}, \mathit{payload}, \mathit{payload\_hash}, \mathit{timestamp}, \mathit{artifact\_hash}, \mathit{reserved\_energy}, \mathit{settled\_energy}, \mathit{event\_hash})$, where $\mathit{id}$ is a unique identifier (UUID), $\mathit{seq} \in \mathbb{N}$ is a monotonically increasing sequence number, $\mathit{payload\_hash} = H(\mathrm{canonical}(\mathit{payload}))$, $\mathit{artifact\_hash}$ is present for \texttt{execute} actions (echoing the submitted hash), $\mathit{reserved\_energy}$ and $\mathit{settled\_energy}$ record the energy accounting, and $\mathit{event\_hash} = H(0\mathtt{x00} \| \mathrm{canonical}(e))$ is the RFC~6962 leaf hash.
\end{definition}

\begin{remark}
Inter-event integrity is maintained through the Merkle tree structure rather than per-event hash chains. This is a design choice: Merkle trees provide $O(\log n)$ inclusion and consistency proofs, whereas per-event hash chains support only $O(n)$ prefix verification. \PunkGo{} uses the Google tlog algorithm (C2SP tlog-checkpoint format), the same structure used by Certificate Transparency and the Go Module Mirror.
\end{remark}

\begin{definition}[Log]
A log $\logL = \langle e_1, e_2, \ldots, e_n \rangle$ is a finite totally ordered sequence of events satisfying: (i)~$\forall\, i: e_i.\mathit{seq} = i$ (contiguous sequence numbers), and (ii)~Merkle tree leaves are ordered by $\mathit{seq}$.
\end{definition}

\begin{definition}[Merkle Root]
For a log $\logL$, its Merkle root is $\mathrm{root}(\logL) = \mathrm{MerkleRoot}(e_1, \ldots, e_n)$, following the RFC~6962 Merkle Hash Tree construction~\cite{rfc6962}.
\end{definition}

\begin{definition}[Inclusion Proof]
For event $e_i \in \logL$, the inclusion proof $\pi(e_i, \logL)$ is a Merkle path from $e_i$ to $\mathrm{root}(\logL)$, of size $O(\log |\logL|)$. A verifier holding $\mathrm{root}(\logL)$ and $\pi$ can verify the existence of $e_i$.
\end{definition}

\begin{definition}[Consistency Proof]
For two log versions $\logL$ and $\logL'$ ($|\logL| \leq |\logL'|$), the consistency proof $\sigma(\logL, \logL')$ demonstrates that $\logL$ is a prefix of $\logL'$, with size $O(\log |\logL'|)$.
\end{definition}

\subsection{Capability Model}
\label{sec:cap}

The capability model follows the principle of least privilege, drawing on the foundational capability concept introduced by Dennis and Van Horn~\cite{dennis-vanhorn}.

\begin{definition}[Writable Declaration]
Each actor $p \in \actors$ has a writability set $W(p) \subseteq \mathit{Pattern} \times \mathit{ActionType}$, where $\mathit{Pattern}$ is a glob-pattern over target identifiers. $W(p)$ is declared at the time $p$ is created.
\end{definition}

\begin{axiom}[Default Deny]
\label{ax:deny}
$\forall$ action $A$ by actor $p$: if $\nexists\, (\mathit{pat}, \mathit{type}) \in W(p)$ such that $\mathit{target}(A) \in \mathrm{match}(\mathit{pat})$ and $\mathit{type}(A) = \mathit{type}$, then $A$ is rejected.
\end{axiom}

\begin{definition}[Root Privilege]
The root actor $r$ has initial writability $W(r) = \{(\texttt{**}, \texttt{*})\}$ (full wildcard). Privileged targets (\texttt{system/*}, \texttt{ledger/*}) are writable only by root.
\end{definition}

\begin{definition}[Envelope]
An envelope is a temporary authorization structure $\varepsilon = (\mathit{issuer}, \mathit{budget}, \mathit{targets}, \mathit{actions}, \mathit{duration}, \mathit{checkpoint}, \mathit{hold\_on}, \mathit{hold\_timeout})$, satisfying $\mathit{targets}(\varepsilon) \times \mathit{actions}(\varepsilon) \subseteq W(\mathit{issuer}(\varepsilon))$ (the permission subset constraint).
\end{definition}

\begin{remark}
Only human actors ($\actorsh$) may create agents (enforced by the kernel; agents attempting to create agents receive a policy violation error). However, an agent holding an active envelope \emph{may} issue a sub-envelope to another existing agent, subject to \emph{envelope reduction validation}: the sub-envelope's targets, actions, and budget must be a subset of the issuing envelope's remaining authorization. This yields a two-tier model: agent \emph{creation} is human-only, while envelope \emph{delegation} supports constrained chains with monotonically decreasing privileges.
\end{remark}

\subsection{Energy Model}
\label{sec:energy}

The energy model serves two concrete purposes: (1)~it provides a denial-of-service defense by limiting each actor's action rate to a hardware-anchored budget, preventing a single agent from monopolizing the commit pipeline; and (2)~it creates an economic constraint on the hold mechanism, ensuring that repeated hold requests have a real cost that deters boundary probing (threat T2). Energy is not a currency to be traded; it is a resource accounting mechanism analogous to process CPU quotas in operating systems, but operating at the semantic action level rather than the instruction level.

\begin{definition}[Hardware Capacity]
At startup, the kernel reads the hardware compute capacity $\lambda$ (measured in INT8 Tera Operations Per Second (TOPS) or a user-configured value) from a configuration file, serving as the physical anchor for energy production.
\end{definition}

\begin{definition}[Continuous Production]
At each tick $t$, the system produces energy $\mathrm{produce}(t) = \lambda \cdot \Delta t$, where $\Delta t$ is the tick duration. The production event is recorded in the log.
\end{definition}

\begin{definition}[Share-Based Allocation]
Each actor $p$ holds a share $\mathrm{share}(p) \in \mathbb{R}_{>0}$. Per-tick energy received:
\[
\mathrm{receive}(p, t) = \frac{\mathrm{share}(p)}{\sum_{q \in \actors} \mathrm{share}(q)} \cdot \mathrm{produce}(t)
\]
\end{definition}

\begin{definition}[Cost Function]
The cost function $\mathrm{cost}: \mathit{ActionType} \to \mathbb{R}_{\geq 0}$ is a deployment-configurable, monotonically ordered function satisfying: $\mathrm{cost}(\texttt{observe}) \leq \mathrm{cost}(\texttt{create}) \leq \mathrm{cost}(\texttt{mutate}) \leq \mathrm{cost}(\texttt{execute})$. The \texttt{execute} cost may include an I/O-proportional component based on the \texttt{output\_bytes} field reported by the submitting actor.
\end{definition}

\begin{remark}
The reference implementation uses $\mathrm{cost}(\texttt{observe}) = 0$, $\mathrm{cost}(\texttt{create}) = 10$, $\mathrm{cost}(\texttt{mutate}) = 15$, $\mathrm{cost}(\texttt{execute}) = 25 + \lfloor\mathit{output\_bytes}/256\rfloor$. These values are derived from profiling the relative pipeline overhead of each action type; the ordering reflects the principle that state-changing actions should cost more than read-only ones. The formal framework is parametric in these values: all invariants hold for any assignment satisfying the monotonicity constraint.
\end{remark}

\begin{axiom}[Production Sufficiency]
$\lambda \cdot \Delta t \geq \max_{\mathit{type}}\, \mathrm{cost}(\mathit{type})$, ensuring the Right to History is not undermined by energy design.
\end{axiom}

\begin{definition}[Envelope Budget Constraint]
For an active envelope $\varepsilon$: $\mathrm{consumed}(\varepsilon) + \mathrm{reserved}(\varepsilon) \leq \mathrm{budget}(\varepsilon)$, where $\mathrm{consumed}$ is settled energy and $\mathrm{reserved}$ is energy locked by pending holds.
\end{definition}

\subsection{Hold Mechanism}
\label{sec:hold}

\begin{definition}[Hold Rules]
An envelope $\varepsilon$ may declare hold rules $H(\varepsilon) \subseteq \mathit{Pattern} \times \mathit{ActionType}$.
\end{definition}

\begin{definition}[Hold Trigger]
When action $A$ executes under envelope $\varepsilon$ and $(\mathit{target}(A), \mathit{type}(A))$ matches $H(\varepsilon)$, the kernel atomically: (1)~computes $\mathrm{cost}(A)$, (2)~reserves $\mathrm{cost}(A)$ energy, (3)~appends a \texttt{hold\_request} event to the log, and (4)~returns a \texttt{HoldTriggered} status. The envelope remains \emph{Active}---the agent may continue submitting non-held actions.
\end{definition}

\begin{definition}[Hold Response]
A human actor responds via \texttt{mutate ledger/hold/\textlangle id\textrangle}:
\begin{itemize}[nosep]
\item \textbf{Approve}: execute the original action $\to$ settle $\to$ append result event.
\item \textbf{Reject}: settle commitment cost ($20\%$ of reserved) $\to$ release remainder $\to$ append \texttt{hold\_response} event.
\item \textbf{Timeout} (if \texttt{hold\_timeout\_secs} set): equivalent to reject.
\end{itemize}
\end{definition}

\begin{definition}[Commitment Cost]
On reject or timeout: $\mathrm{commitment\_cost} = \lceil 0.2 \times \mathrm{reserved\_cost} \rceil$. This reflects hardware resources consumed by validation, quoting, reservation, and recording.
\end{definition}

\begin{remark}
The commitment cost reflects a real resource expenditure: even a rejected or timed-out hold has already consumed kernel resources---validation, quoting, energy reservation, and writing a \texttt{hold\_request} event to the append-only log (analogous to reverted transactions still consuming gas in Ethereum). The 20\% rate is a configurable engineering parameter balancing two extremes: 0\% allows zero-cost boundary probing (enabling T2 budget exhaustion); 100\% penalizes legitimate ``request-then-approve'' workflows. The current value is based on profiling typical agent workloads, where the pre-decision pipeline stages consume approximately 15--25\% of total action cost. A formal game-theoretic analysis of the optimal commitment rate under adversarial agent strategies is left as future work.
\end{remark}

\subsection{System Invariants and Proofs}
\label{sec:invariants}

This section states and argues five core invariants for the \PunkGo{} system. Arguments follow a structured pre-/post-condition style based on the state transition semantics defined in \S\ref{sec:world}--\S\ref{sec:hold}. These are proof sketches rather than machine-checked proofs; full verification using Verus~\cite{verus} is future work (\S\ref{sec:discussion}).

\begin{invariant}[Append-Only]
\label{inv:append}
$\forall\, t_1 < t_2$: $\logL(t_1) \sqsubseteq \logL(t_2)$, i.e., $\logL(t_1)$ is a prefix of $\logL(t_2)$: $\forall\, i \leq |\logL(t_1)|: \logL(t_2)[i] = \logL(t_1)[i]$.
\end{invariant}

\begin{proof}[Proof sketch]
\emph{(Step 1)} By Definition~\ref{def:committer}, the kernel is the sole committer. The only operation that modifies $\logL$ is \textsc{Append}, whose postcondition is $\logL' = \logL \| \langle e \rangle$---concatenation at the end, without modifying existing elements.

\emph{(Step 2)} No \textsc{Delete} or \textsc{Modify} operation exists in the system. By Axiom~\ref{ax:deny} (default deny), \texttt{ledger/*} targets are writable only by root, and root's operations also pass through the kernel pipeline, where the only available mutation is \textsc{Append}.

\emph{(Step 3)} The Merkle tree structure provides additional protection: modifying event $e_i$ changes its leaf hash $H(0\mathtt{x00} \| \mathrm{canonical}(e_i))$, invalidating every intermediate node from that leaf to the root. Under the collision-resistance assumption (A4), an attacker cannot find alternative content that preserves the leaf hash.

Therefore $\logL$ grows monotonically.
\end{proof}

\begin{invariant}[Completeness]
\label{inv:complete}
$\forall$ action $A$: if $A$ passes boundary check (i.e., $(\mathit{target}(A), \mathit{type}(A)) \in W(\mathit{actor}(A))$) and has sufficient energy (i.e., passes \texttt{reserve}), then $\exists\, e \in \logL$ recording $A$'s commit or hold status.
\end{invariant}

\begin{proof}[Proof sketch]
\emph{(Step 1)} The kernel's action pipeline is a deterministic linear sequence:
\[
\texttt{validate} \to \texttt{quote} \to \texttt{reserve} \to \texttt{validate\_payload} \to \texttt{settle} \to \texttt{append} \to \texttt{receipt}
\]

\emph{(Step 2)} Passing boundary check means $A$ passes \texttt{validate}. Two paths follow: (a)~\emph{normal path}: \texttt{quote} $\to$ \texttt{reserve} $\to$ \texttt{validate\_payload} $\to$ \texttt{settle} $\to$ \textbf{append}; (b)~\emph{hold path}: validate internally performs quote $+$ reserve $+$ \textbf{append hold\_request} $\to$ \texttt{HoldTriggered} $\to$ \{approve $\to$ re-submit with replay guards $\to$ full pipeline $\to$ \textbf{append}\} $|$ \{reject/timeout $\to$ settle commitment $\to$ \textbf{append hold\_response}\}.

\emph{(Step 3)} Both the normal path and the hold path terminate with some form of append. Actions that pass boundary check but lack sufficient energy are rejected at \texttt{reserve} with \texttt{InsufficientEnergy}; these are not recorded in the permanent log, as Energy Conservation (INV-5) takes precedence---the system prioritizes resource integrity over recording resource-denied actions. Completeness thus covers all actions that enter the funded pipeline.

Therefore every boundary-checked, energy-funded action is recorded.
\end{proof}

\begin{invariant}[Integrity]
\label{inv:integrity}
$\forall\, t$: $\mathrm{root}(t) = \mathrm{MerkleRoot}(\logL(t))$, i.e., the Merkle root precisely commits the current log contents.
\end{invariant}

\begin{proof}[Proof sketch]
\emph{(Base case)} $\logL = \langle\rangle$, $\mathrm{root} = \mathit{null}$. $\mathrm{MerkleRoot}(\langle\rangle) = \mathit{null}$. Holds.

\emph{(Inductive step)} Assume $\mathrm{root}(t) = \mathrm{MerkleRoot}(\logL(t))$. When \textsc{Append} occurs with event $e$: $\logL(t{+}1) = \logL(t) \| \langle e \rangle$ and $\mathrm{root}(t{+}1) = \mathrm{MerkleUpdate}(\mathrm{root}(t), e)$. By the correctness of the RFC~6962 Merkle tree construction, $\mathrm{MerkleUpdate}$ computes precisely $\mathrm{MerkleRoot}(\logL(t) \| \langle e \rangle)$.

Root is updated only during \textsc{Append}. By INV-1, no other operation modifies $\logL$ or root. Therefore root always reflects $\logL$.
\end{proof}

\begin{corollary}[Anti-Equivocation]
If two verifiers obtain $\mathrm{root}_1 \neq \mathrm{root}_2$ from the kernel, at least one does not correspond to the kernel's true log---equivocation is detected. Specifically, for a claimed prefix relation $\logL \sqsubseteq \logL'$, a consistency proof (Definition~4.10) can verify that $\logL$ is indeed a prefix of $\logL'$; verification failure proves tampering or forking. \emph{Caveat:} this property requires at least two independent verifiers comparing roots. In the single-user, single-machine deployment (the current \PunkGo{} scenario), external root anchoring---e.g., periodically publishing roots to a public service or a trusted third party---is needed to realize anti-equivocation in practice. The kernel's Audit Prover already exports signed tree heads in the C2SP checkpoint format, providing the architectural hook for such anchoring.
\end{corollary}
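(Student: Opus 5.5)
The plan is to derive the corollary from INV-3 (Integrity) together with INV-1 (Append-Only) and the collision-resistance assumption A4. The argument splits into two parts: the claim about distinct roots at equal size, and the claim about consistency proofs for a prefix relation.

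\emph{Distinct roots at equal size.} Suppose two verifiers each receive a root for the same tree size $n$ (as carried in a C2SP checkpoint), and $\mathrm{root}_1 \neq \mathrm{root}_2$. By INV-1, the kernel's log at any time has exactly one length-$n$ prefix, namely $\logL(t)[1..n]$. By INV-3, the honest root for that prefix is the single value $\mathrm{MerkleRoot}(\logL(t)[1..n])$, and $\mathrm{MerkleRoot}$ is a deterministic function of its input sequence. So at most one of $\mathrm{root}_1, \mathrm{root}_2$ can equal this value, and the other does not correspond to the true log.

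When the sizes differ, $n_1 < n_2$, unequal roots are expected and say nothing by themselves. I would therefore restate the first sentence of the corollary so that it refers to roots at a common size, or to roots that fail a consistency check.

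\emph{Prefix claims and consistency proofs.} Take a claimed prefix relation $\logL \sqsubseteq \logL'$. INV-1 guarantees that this relation holds between any two honest snapshots $\logL(t_1)$ and $\logL(t_2)$. The RFC~6962 consistency-proof algorithm (Definition~4.10) is complete: an honest prover can always produce a $\sigma$ that verifies.

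Soundness is the contrapositive direction. If $\sigma$ verifies against $\mathrm{root}(\logL)$ and $\mathrm{root}(\logL')$ but $\logL$ is not a prefix of $\logL'$, then recomputing the shared subtree hashes yields two distinct preimages of the same node hash, which contradicts A4. Hence a verification failure means that either the prefix relation is false or some root is not the honest one. Either case contradicts INV-1 or INV-3 for an honest kernel, so the failure proves tampering or forking.

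\emph{Main obstacle and caveat.} The hardest step is the soundness reduction. Formally it requires induction over the RFC~6962 consistency-proof structure, descending through the split at the largest power of two below $n$, and showing that any mismatch propagates to a hash collision. I would cite the established soundness of RFC~6962 and Crosby--Wallach rather than re-derive it.

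The caveat is not a mathematical step. It only observes that detection requires comparing roots across at least two views, and a single verifier cannot do this alone. I would note that the exported signed checkpoints supply exactly the objects these arguments compare.
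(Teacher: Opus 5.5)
Your proposal is correct and follows essentially the paper's route. The paper gives no separate argument and treats the corollary as immediate from INV-3 (the root is a deterministic function of the append-only log) together with the RFC~6962 consistency-proof definition; your restriction of the first claim to roots at a common tree size (the literal statement fails for honest roots taken at different sizes) and your completeness/soundness split only make explicit what the paper leaves implicit. One local slip: ``verification failure $\Rightarrow$ tampering or forking'' follows from the contrapositive of the completeness property you state, not from soundness, which is the converse and instead justifies the other half of the claim, that a successful check certifies $\logL \sqsubseteq \logL'$.
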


\begin{invariant}[Boundary Enforcement]
\label{inv:boundary}
$\forall$ action $A$ by actor $p$: $(\mathit{target}(A), \mathit{type}(A)) \notin W(p) \implies A$ is not committed and produces no log entry.
\end{invariant}

\begin{proof}[Proof sketch]
\emph{(Step 1)} \texttt{validate} is the first pipeline step. Its postcondition: if $(\mathit{target}(A), \mathit{type}(A)) \in W(\mathit{actor}(A))$, return \textsc{Validated}; otherwise return \textsc{Rejected} and terminate the pipeline.

\emph{(Step 2)} The \texttt{settle} and \texttt{append} steps (which commit the action to the permanent log) are reachable only after \texttt{validate} returns \textsc{Validated}.

\emph{(Step 3)} The \textsc{Rejected} path does not pass through \texttt{settle} or \texttt{append}. The kernel does not record a rejection event in the append-only log (the action never entered the committed pipeline), ensuring no state changes occur.

\emph{Design note:} Not recording rejections is a deliberate trade-off. Security audit systems (e.g., auditd) typically log failed attempts to detect probing attacks. In \PunkGo{}, the energy system provides an alternative defense: each attempt (including those targeting holds) consumes energy, making sustained probing self-limiting. Optionally logging rejections as zero-cost observe events is a configurable extension that does not affect INV-4's guarantee.

Therefore out-of-bounds actions cannot be committed.
\end{proof}

\begin{invariant}[Energy Conservation]
\label{inv:energy}
$\forall$ actor $p$, $\forall\, t$: $E(p, t) \geq 0$.
\end{invariant}

\begin{proof}[Proof sketch]
\emph{(Step 1)} \texttt{quote} computes $\mathrm{cost}(A)$. \texttt{reserve} checks: $E(\mathit{actor}(A)) - \mathrm{reserved}(\mathit{actor}(A)) \geq \mathrm{cost}(A)$. If unsatisfied, \texttt{InsufficientEnergy} is returned and $A$ is not committed.

\emph{(Step 2)} On successful reservation: $\mathrm{reserved} \mathrel{+}= \mathrm{cost}(A)$. The sole-committer guarantee (Definition~\ref{def:committer}) ensures no concurrent operation can double-spend the same energy.

\emph{(Step 3)} \texttt{settle} converts reserved to consumed: $E(p) \mathrel{-}= \mathrm{actual\_cost}$. For hold reject/timeout: $E(p) \mathrel{-}= \lceil 0.2 \times \mathrm{reserved\_cost}\rceil$. In both cases, the deduction does not exceed the amount verified available at reservation time.

Therefore $E(p, t) \geq 0$ always holds.
\end{proof}

\paragraph{Invariant Dependencies.}
The five invariants form a security chain (Figure~\ref{fig:chain}): Boundary Enforcement (INV-4) ensures only legitimate actions enter the pipeline; Energy Conservation (INV-5) ensures legitimate actions have sufficient resources; Completeness (INV-2) ensures every executed or held action is recorded; Append-Only (INV-1) ensures records cannot be tampered with; Integrity (INV-3) ensures tamper-evidence is independently verifiable. Compromising any invariant requires first compromising its upstream invariant.

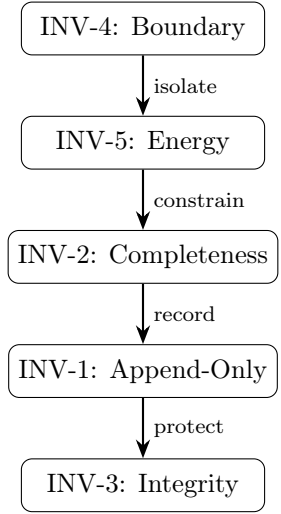
\begin{figure}[t]
\centering
\begin{tikzpicture}[
  node distance=0.8cm,
  box/.style={draw, rounded corners, minimum width=3.2cm, minimum height=0.7cm, align=center, font=\small},
  arr/.style={-{Stealth[length=2.5mm]}, thick}
]
\node[box] (inv4) {INV-4: Boundary};
\node[box, below=of inv4] (inv5) {INV-5: Energy};
\node[box, below=of inv5] (inv2) {INV-2: Completeness};
\node[box, below=of inv2] (inv1) {INV-1: Append-Only};
\node[box, below=of inv1] (inv3) {INV-3: Integrity};

\draw[arr] (inv4) -- node[right, font=\scriptsize] {isolate} (inv5);
\draw[arr] (inv5) -- node[right, font=\scriptsize] {constrain} (inv2);
\draw[arr] (inv2) -- node[right, font=\scriptsize] {record} (inv1);
\draw[arr] (inv1) -- node[right, font=\scriptsize] {protect} (inv3);
\end{tikzpicture}
\caption{The security chain formed by the five invariants. Each invariant depends on the one above it.}
\label{fig:chain}
\end{figure}

\section{System Design: PunkGo Kernel}
\label{sec:design}

\subsection{Architecture Overview}

Figure~\ref{fig:arch} illustrates \PunkGo{}'s three-layer architecture.

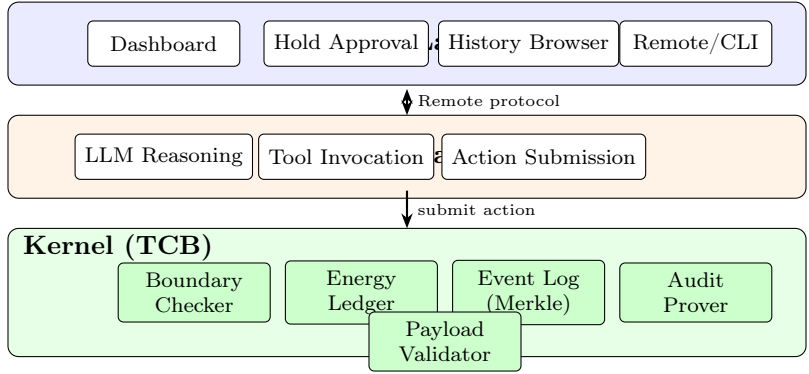
\begin{figure}[t]
\centering
\begin{tikzpicture}[
  layer/.style={draw, rounded corners=4pt, minimum width=10.5cm, minimum height=1.1cm, align=center, font=\small},
  comp/.style={draw, rounded corners=2pt, minimum width=2.0cm, minimum height=0.6cm, align=center, font=\scriptsize, fill=white},
  arr/.style={-{Stealth[length=2mm]}, thick},
  darr/.style={{Stealth[length=2mm]}-{Stealth[length=2mm]}, thick}
]
\node[layer, fill=blue!8] (client) at (0,3.6) {\textbf{Client Layer}};
\node[comp] at (-3.2,3.6) {Dashboard};
\node[comp] at (-0.8,3.6) {Hold Approval};
\node[comp] at (1.6,3.6) {History Browser};
\node[comp] at (3.8,3.6) {Remote/CLI};

\node[layer, fill=orange!10] (agent) at (0,2.1) {\textbf{Agent Layer}};
\node[comp] at (-3.2,2.1) {LLM Reasoning};
\node[comp] at (-0.8,2.1) {Tool Invocation};
\node[comp] at (1.8,2.1) {Action Submission};

\node[layer, fill=green!10, minimum height=1.7cm] (kernel) at (0,0.3) {};
\node[font=\small\bfseries] at (-3.8,0.9) {Kernel (TCB)};
\node[comp, fill=green!20] at (-2.8,0.3) {Boundary\\Checker};
\node[comp, fill=green!20] at (-0.6,0.3) {Energy\\Ledger};
\node[comp, fill=green!20] at (1.6,0.3) {Event Log\\(Merkle)};
\node[comp, fill=green!20] at (3.8,0.3) {Audit\\Prover};
\node[comp, fill=green!20] at (0.5,-0.35) {Payload\\Validator};

\draw[darr] (0,3.0) -- (0,2.65) node[right, font=\tiny, pos=0.5] {Remote protocol};
\draw[arr] (0,1.65) -- (0,1.15) node[right, font=\tiny, pos=0.5] {submit action};
\end{tikzpicture}
\caption{Three-layer architecture of \PunkGo{}. The kernel layer (green, TCB) is the sole commit point; agents submit actions but cannot directly access system resources.}
\label{fig:arch}
\end{figure}

\PunkGo{} follows a three-layer architecture:

\paragraph{Kernel Layer (TCB).} A Rust user-space process serving as the sole commit point for all AI agent actions. Components: Event Log (Merkle tree), Energy Ledger, Boundary Checker, Payload Validator, and Audit Prover. The kernel is a \emph{committer, not an executor}: it validates, records, and proves actions, but does not spawn or run external processes on behalf of any actor.

\paragraph{Agent Layer.} Receives commands from the client layer (locally via CLI, or remotely via a messaging protocol such as Matrix) and submits actions to the kernel below. Agents handle LLM reasoning, tool invocation, and execution; they submit action results to the kernel for recording. Agents do not directly manipulate kernel state---all committed records must pass through the kernel pipeline.

\paragraph{Client Layer.} Remote control entry point communicating with agents via a messaging protocol (e.g., Matrix). Provides dashboard, hold-approval interface, and history browser. The client does not interact directly with the kernel.

A key design principle: \emph{the kernel is command-source agnostic}. Whether a command arrives from a local CLI or a remote messaging protocol does not affect the kernel's validation, execution, and recording logic, ensuring audit consistency.

\subsection{Action Pipeline}

The kernel processes each action through its \texttt{submit\_action} entry point, a 7-step pipeline:

\begin{enumerate}[nosep]
\item \textbf{Validate}: Check actor existence and active status; verify that the action target falls within the actor's writable boundary $((\mathit{target}, \mathit{type}) \in W(\mathit{actor}))$; for state-changing actions, check envelope authorization. [INV-4]
\item \textbf{Quote}: Compute $\mathrm{cost}(A)$ via \texttt{quote\_cost}.
\item \textbf{Reserve}: Check envelope available energy $\geq \mathrm{cost}(A)$ and lock the required amount. [INV-5]
\item \textbf{Validate Payload}: For \texttt{execute} actions, validate that all required fields are present with correct types and that OID format is valid (\texttt{sha256:} followed by 64 hex characters). For other action types, validate the corresponding payload constraints.
\item \textbf{Settle}: Convert locked energy to consumed, finalizing energy accounting.
\item \textbf{Append}: Write the committed event to the append-only log, compute the RFC~6962 leaf hash, update the Merkle tree, and write the audit leaf. [INV-1, INV-3]
\item \textbf{Receipt}: Return a verifiable receipt: \texttt{event\_id} (UUID), \texttt{log\_index} (sequence number), \texttt{event\_hash} (SHA-256).
\end{enumerate}

The pipeline is synchronous, linear, and deterministic---a direct consequence of the sole-committer design. Any step failure terminates the pipeline: \texttt{validate} failure returns \texttt{Rejected}; \texttt{reserve} failure returns \texttt{InsufficientEnergy}. Failed paths produce no log entries.

\paragraph{Hold path.} Hold is not a branch \emph{within} the pipeline; it is a \textbf{short-circuit exit inside validate}. When \texttt{validate} detects that an action matches the envelope's \texttt{hold\_on} rules, the kernel performs the following operations \emph{inside} the validate step, within a single database transaction:

\begin{enumerate}[nosep]
\item Call \texttt{quote\_cost} to compute the energy cost,
\item Reserve energy on the envelope,
\item Append a \texttt{hold\_request} event to the log and audit tree,
\item Create a hold request record,
\item Commit the transaction and return \texttt{HoldTriggered}.
\end{enumerate}

Normal pipeline steps 2--7 are unreachable for held actions. From an abstract perspective, the hold path is: $\texttt{validate}(\text{including quote} + \text{reserve} + \text{append hold\_request}) \to \texttt{HoldTriggered}$.

\paragraph{Hold response.} A human responds via \texttt{mutate ledger/hold/<id>}:

\emph{Approve.} The kernel recursively re-submits the original action to \texttt{submit\_action} with two payload markers (replay guards): (1)~\texttt{\_hold\_approved}, which causes the hold-rule check inside validate to become a no-op, and (2)~\texttt{\_hold\_reserved\_cost}, which causes quote and reserve to produce a phantom reservation matching the amount locked at hold time, preventing double-charging. The re-submitted action traverses the full pipeline: validate re-checks actor status, boundary permissions, and envelope authorization (which may have changed during the hold period), while steps 4--7 execute normally. This design avoids a special ``resume from step~4'' path; instead, it reuses the same \texttt{submit\_action} function with replay guards that make already-completed steps idempotent.

\emph{Reject.} Settle commitment cost ($\lceil 0.2 \times \mathit{reserved\_cost} \rceil$), release remaining energy, and append a \texttt{hold\_response} event.

\emph{Timeout} (if \texttt{hold\_timeout\_secs} is configured): equivalent to reject.

Figure~\ref{fig:pipeline} illustrates the flow.

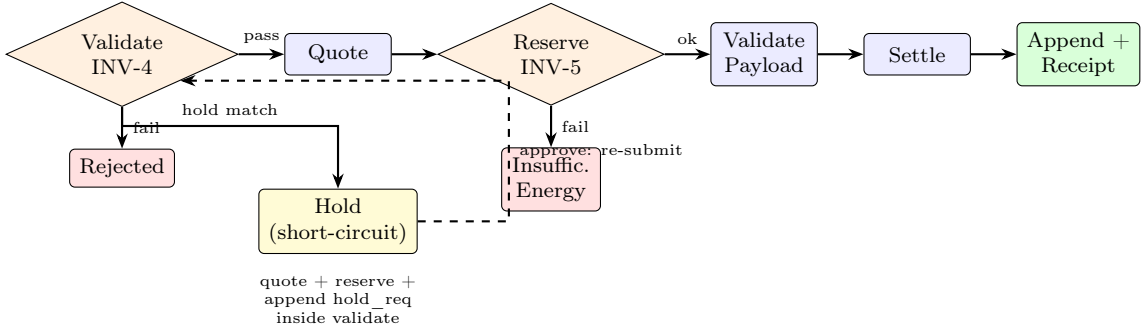
\begin{figure}[t]
\centering
\begin{tikzpicture}[
  stp/.style={draw, rounded corners=2pt, minimum width=1.4cm, minimum height=0.55cm, align=center, font=\scriptsize, fill=blue!8},
  decision/.style={draw, diamond, aspect=2.2, minimum width=1.2cm, align=center, font=\scriptsize, fill=orange!12},
  result/.style={draw, rounded corners=2pt, minimum width=1.3cm, minimum height=0.5cm, align=center, font=\scriptsize, fill=green!15},
  fail/.style={draw, rounded corners=2pt, minimum width=1.3cm, minimum height=0.5cm, align=center, font=\scriptsize, fill=red!12},
  hold/.style={draw, rounded corners=2pt, minimum width=1.3cm, minimum height=0.5cm, align=center, font=\scriptsize, fill=yellow!20},
  arr/.style={-{Stealth[length=2mm]}, thick},
  darr/.style={-{Stealth[length=2mm]}, thick, dashed},
  node distance=0.45cm and 0.6cm
]
\node[decision] (validate) {Validate\\INV-4};
\node[stp, right=of validate] (quote) {Quote};
\node[decision, right=of quote] (reserve) {Reserve\\INV-5};
\node[stp, right=of reserve] (valpay) {Validate\\Payload};
\node[stp, right=of valpay] (settle) {Settle};
\node[result, right=of settle] (append) {Append +\\Receipt};

\node[fail, below=0.55cm of validate] (reject) {Rejected};
\node[fail, below=0.55cm of reserve] (insuf) {Insuffic.\\Energy};
\node[hold, below=1.5cm of quote] (held) {Hold\\(short-circuit)};
\node[font=\tiny, below=0.15cm of held, text width=2.5cm, align=center] (heldnote) {quote + reserve +\\append hold\_req\\inside validate};

\draw[arr] (validate) -- node[above, font=\tiny] {pass} (quote);
\draw[arr] (validate) -- node[right, font=\tiny] {fail} (reject);
\draw[arr] (validate.south) -- ++(0,-0.25) -| node[above, font=\tiny, pos=0.25] {hold match} (held);
\draw[arr] (quote) -- (reserve);
\draw[arr] (reserve) -- node[above, font=\tiny] {ok} (valpay);
\draw[arr] (reserve) -- node[right, font=\tiny] {fail} (insuf);
\draw[arr] (valpay) -- (settle);
\draw[arr] (settle) -- (append);
\draw[darr] (held.east) -- ++(1.2,0) |- node[right, font=\tiny, pos=0.25] {approve: re-submit} (validate.south east);
\end{tikzpicture}
\caption{Action pipeline. The normal path traverses all seven steps left to right. The hold path short-circuits inside validate (which internally performs quote, reserve, and append of the hold request), returning \texttt{HoldTriggered}. On approve, the action is recursively re-submitted to the full pipeline with replay guards that make the hold check and energy reservation idempotent.}
\label{fig:pipeline}
\end{figure}

\subsection{Append-Only Event Log}

Persistence is backed by a configurable append-only storage backend (the reference implementation uses SQLite in WAL mode, but the design does not bind to any specific database---any store providing atomic writes and durability guarantees is substitutable). The Merkle tree implementation is based on the Google tlog algorithm (C2SP tlog-checkpoint signed tree head format):

\begin{itemize}[nosep]
\item Leaf nodes: $H(0\mathtt{x00} \| \mathit{event\_data})$ (RFC~6962 \S2.1)
\item Internal nodes: $H(0\mathtt{x01} \| \mathit{left} \| \mathit{right})$
\end{itemize}

Two proof types are supported: \emph{inclusion proofs} ($O(\log n)$, verifying a specific event's existence) and \emph{consistency proofs} ($O(\log n)$, verifying the old log is a prefix of the new log). Users can export any time range of events plus Merkle proofs as a standalone, independently verifiable audit package.

\subsection{Energy Ledger}

The energy system anchors hardware compute capacity as a virtual resource:

\begin{itemize}[nosep]
\item \textbf{Hardware capacity}: Read from a configuration file at startup; runtime hardware probing is future work.
\item \textbf{Tick production}: Energy produced at fixed intervals, recorded in the log.
\item \textbf{Share allocation}: Distributed proportionally among actors.
\item \textbf{Envelope budgets}: Creators allocate budgets to agents, with permissions that cannot exceed the creator's own.
\end{itemize}

The 20\% commitment cost on rejected or timed-out holds reflects actual resource consumption and deters boundary probing (see \S\ref{sec:hold} for the formal rationale).

\subsection{Hold/Approval Workflow}

Table~\ref{tab:hold-comparison} contrasts \PunkGo{}'s hold mechanism with existing approaches.

\begin{table}[t]
\centering
\caption{Comparison of human-approval mechanisms.}
\label{tab:hold-comparison}
\small
\begin{tabularx}{\columnwidth}{lXcc}
\toprule
\textbf{System} & \textbf{Mechanism} & \textbf{Econ.} & \textbf{Audit} \\
\midrule
Terraform plan & Full-batch pause & No & No \\
LangGraph & \texttt{interrupt()} & No & No \\
OpenAI SDK & Boolean flag & No & No \\
\PunkGo{} & Rule-matched hold & 20\% & Merkle \\
\bottomrule
\end{tabularx}
\end{table}

\PunkGo{}'s hold is not merely flow control; it is an economic constraint that prevents agents from probing system boundaries via mass hold requests.

\subsection{Execute Submission}

The kernel is a \emph{committer, not an executor}. For \texttt{execute} actions, the actor performs the execution externally and submits the result to the kernel. The kernel validates the payload format and records the event---it does not spawn, run, or wait for external processes on behalf of any actor.

The execute payload must contain:

\begin{itemize}[nosep]
\item \texttt{input\_oid}: \texttt{sha256:<64 hex>} --- reference to serialized input (command, args, env, working directory)
\item \texttt{output\_oid}: \texttt{sha256:<64 hex>} --- reference to serialized output (stdout, stderr)
\item \texttt{exit\_code}: integer --- process exit code
\item \texttt{artifact\_hash}: \texttt{sha256:<64 hex>} --- SHA-256 of the raw output bytes
\item \texttt{output\_bytes}: integer (optional) --- output size for IO cost calculation
\end{itemize}

The kernel validates that all required fields are present with correct types and that OID format is valid, but does \emph{not} verify whether OID references point to actual content. The kernel records what the actor claims, consistent with \texttt{mutate} behavior. The implications of this trust boundary---and the analogy to Certificate Transparency's approach---are discussed in \S\ref{sec:discussion}. Energy cost: $25 + \lfloor\mathit{output\_bytes}/256\rfloor$.

This design maintains symmetry across all four action types: for \texttt{mutate}, the actor writes files and submits metadata (\texttt{content\_oid}); for \texttt{execute}, the actor performs the execution and submits the result. In both cases, the actor acts, the kernel records. Execution environment management (sandboxing, working directory isolation, timeout enforcement) is the actor's responsibility.

\section{Evaluation}
\label{sec:eval}

\PunkGo{} is evaluated along two dimensions: qualitative invariant verification and quantitative performance characterization. All benchmarks use fresh kernel instances (clean databases) and are compiled with \texttt{--release} optimization.

\subsection{Experimental Setup}

\begin{itemize}[nosep]
\item \textbf{Hardware}: Intel Core i7-14700K (14th gen, 20 cores / 28 threads, 32 logical), 64\,GB DDR5-5600 RAM, NVMe SSD.
\item \textbf{Software}: Windows~11 Pro (build 26100), \PunkGo{} Kernel v0.2.1, Rust~1.90.0 (release mode), SQLite WAL backend.
\item \textbf{Code}: Open-sourced at \url{https://github.com/PunkGo/punkgo-kernel}. Benchmark scripts included in the repository.
\item \textbf{Methodology}: Latency tests use 200 iterations (20 warmup discarded), reporting medians and 95th percentiles. Throughput tests use 5-second windows. All benchmarks are single-threaded (sole-committer design).
\end{itemize}

\subsection{Invariant Verification}

Adversarial test scenarios were designed for each invariant. All five pass:

\paragraph{INV-1 (Append-Only).} The test submitted 5 actions, then bypassed the kernel to directly modify the third event's payload in the underlying store. The recomputed event hash diverged from the stored Merkle leaf hash (\texttt{f79d...} $\to$ \texttt{6a3e...}). Inclusion proof verification against the original root: \textbf{failed} (expected). Tamper detection confirmed in 55\,ms.

\paragraph{INV-2 (Completeness).} The test submitted 20 actions: 5$\times$observe, 5$\times$create, 5$\times$mutate, and 5$\times$out-of-bounds mutate (targeting \texttt{system/*}). All 15 legitimate actions produced event records; 5 violations were rejected at \texttt{validate} (pipeline entry), producing no log entries---correct behavior, as INV-2 covers only boundary-checked actions. Hold mechanism tested separately: both \texttt{hold\_request} and \texttt{hold\_response} events recorded. 100\,ms.

\paragraph{INV-3 (Integrity).} The test submitted 10 actions and generated inclusion proofs for each, verifying with an \emph{independent} RFC~6962 verifier (not kernel code). All 10/10 passed. Proof lengths matched expectations: $\lceil \log_2(10) \rceil = 4$ hashes (average 3.6, as trailing leaves have shorter paths). After appending 5 more events, consistency proof (old=10, new=15) verified successfully. 72\,ms.

\paragraph{INV-4 (Boundary Enforcement).} The test created an agent with \texttt{writable\_targets} restricted to \texttt{workspace/docs/*} (mutate only) and tested 7 scenarios covering: in-bounds access, out-of-bounds target, disallowed action type, privileged targets (\texttt{system/*}, \texttt{ledger/*}), observe exemption, and root override. All 7/7 matched expected behavior. 55\,ms.

\paragraph{INV-5 (Energy Conservation).} Envelope with budget=100, cost-per-mutate=15. Six successful mutates (balance: 100$\to$85$\to$70$\to$55$\to$40$\to$25$\to$10); seventh rejected with \texttt{InsufficientEnergy}, balance unchanged at 10. Hold commitment test: budget=1000, hold reserved=15, reject settled $\lceil 15 \times 0.2 \rceil = 3$, balance $1000 \to 997$. Balance never negative. 96\,ms.

\subsection{Performance Characteristics}

\paragraph{Action Pipeline Latency.}
End-to-end latency encompasses the full pipeline (validate through receipt, plus post-commit audit checkpoint). Table~\ref{tab:latency} reports results over 200~iterations.

\begin{table}[t]
\centering
\caption{Action pipeline latency ($n=200$, 20 warmup).}
\label{tab:latency}
\small
\begin{tabular}{lrrrr}
\toprule
\textbf{Type} & \textbf{Median} & \textbf{P95} & \textbf{Min} & \textbf{Max} \\
\midrule
\texttt{observe} & 665\,$\mu$s & 876\,$\mu$s & 463\,$\mu$s & 1{,}759\,$\mu$s \\
\texttt{create}  & 1{,}265\,$\mu$s & 1{,}670\,$\mu$s & 990\,$\mu$s & 3{,}310\,$\mu$s \\
\texttt{mutate}  & 1{,}274\,$\mu$s & 1{,}681\,$\mu$s & 994\,$\mu$s & 15{,}470\,$\mu$s \\
\texttt{execute} & 1{,}258\,$\mu$s & 1{,}807\,$\mu$s & 1{,}003\,$\mu$s & 2{,}725\,$\mu$s \\
\bottomrule
\end{tabular}
\end{table}

Median latency is 0.7--1.3\,ms per action, dominated by storage I/O (event persistence and Merkle tree update). The \texttt{execute} action type adds a \texttt{validate\_payload} step (OID format checking) but exhibits latency comparable to \texttt{mutate}, confirming that payload validation adds negligible overhead. Medians and percentiles are reported rather than means with error bars because latency distributions are right-skewed; percentiles better characterize tail behavior.

\paragraph{Merkle Proof Scaling.}
Table~\ref{tab:merkle} shows inclusion proof generation time and size as the log grows.

\begin{table}[t]
\centering
\caption{Merkle inclusion proof scaling.}
\label{tab:merkle}
\small
\begin{tabular}{rrrr}
\toprule
\textbf{Log Size} & \textbf{Median Time ($\mu$s)} & \textbf{Hashes} & \textbf{Bytes} \\
\midrule
10     & 30    & 4  & 128 \\
50     & 87    & 6  & 192 \\
100    & 158   & 7  & 224 \\
500    & 734   & 9  & 288 \\
1{,}000 & 1{,}445 & 10 & 320 \\
10{,}000 & 19{,}332 & 14 & 448 \\
\bottomrule
\end{tabular}
\end{table}

Proof size (hash count) follows $O(\log n)$ precisely: from 4 hashes at 10 entries to 14 hashes at 10{,}000 ($\lceil \log_2 10{,}000 \rceil = 14$). Each hash is 32 bytes (SHA-256), so a 10{,}000-entry proof is only 448~bytes. Generation time grows as $O(n)$ because the current implementation loads all hash nodes from storage before computing the $O(\log n)$ proof path---an engineering optimization opportunity (paginated loading would restore $O(\log n)$ time). At 10{,}000 entries, proof generation takes $\sim$19\,ms; 100{,}000-entry benchmarks were attempted but the $O(n^2)$ cumulative fill cost (each commit triggers $O(n)$ hash loading) made the test impractical, confirming that the hash-loading optimization is the critical scaling bottleneck.

\paragraph{Throughput.}
Table~\ref{tab:throughput} reports sequential single-committer throughput over 5-second windows.

\begin{table}[t]
\centering
\caption{Throughput (5-second window, sequential commits).}
\label{tab:throughput}
\small
\begin{tabular}{lrr}
\toprule
\textbf{Scenario} & \textbf{Actions/sec} & \textbf{5\,s Total} \\
\midrule
\texttt{observe} only & 450 & 2{,}248 \\
\texttt{create} only  & 396 & 1{,}981 \\
\texttt{mutate} only  & 392 & 1{,}958 \\
\texttt{execute} only & 393 & 1{,}968 \\
Mixed                  & 412 & 2{,}059 \\
\bottomrule
\end{tabular}
\end{table}

Throughput is $\sim$390--450 actions/sec. The bottleneck is storage I/O (event persistence and Merkle checkpoint). For personal AI agent usage (typical daily volume in the hundreds to low thousands), current throughput is sufficient.

\paragraph{Hold Workflow Latency.}
Table~\ref{tab:hold} reports mechanical latency only, excluding human decision time.

\begin{table}[t]
\centering
\caption{Hold workflow mechanical latency ($n=8$).}
\label{tab:hold}
\small
\begin{tabular}{lrr}
\toprule
\textbf{Phase} & \textbf{Median ($\mu$s)} & \textbf{Max ($\mu$s)} \\
\midrule
Hold trigger     & 751   & 1{,}323 \\
Read pending     & 22    & 518 \\
Approve          & 2{,}240 & 3{,}057 \\
Reject           & 1{,}385 & 22{,}628 \\
\bottomrule
\end{tabular}
\end{table}

Sample size is small ($n=8$) because each hold test requires multi-step setup (envelope creation, hold trigger, human-response simulation); the table reports medians and maxima rather than percentiles. End-to-end mechanical latency (trigger to approve completion): $\sim$3.0\,ms. Approve is $\sim$2$\times$ slower than reject because it re-submits the original action through the full pipeline and writes two events (action result + hold response), whereas reject only settles commitment cost and writes one event.

\subsection{Comparison with AIOS}

Table~\ref{tab:aios} provides a qualitative capability comparison between \PunkGo{} and AIOS~\cite{aios}, the leading agent operating system. Direct latency comparison is not meaningful: AIOS includes LLM inference and tool execution in its critical path, whereas \PunkGo{}'s kernel measures only the commit path (validate through append). The two systems occupy different layers of the stack.

\begin{table}[t]
\centering
\caption{Qualitative comparison with AIOS~\cite{aios}.}
\label{tab:aios}
\small
\begin{tabularx}{\columnwidth}{lXX}
\toprule
\textbf{Dimension} & \textbf{AIOS} & \textbf{PunkGo} \\
\midrule
TCB & LLM (non-det.) & Kernel (det.) \\
Audit log & None & RFC 6962 Merkle \\
Inclusion proof & None & $O(\log n)$, 448\,B@10K \\
Consistency proof & None & $O(\log n)$ \\
Isolation & Permission hashmap & Capability, default deny \\
Governance & Token scheduling & Energy budget + envelope \\
Human approval & None & \texttt{hold\_on} + 20\% cost \\
Formal invariants & None & 5 (proof sketches) \\
Language & Python & Rust \\
\bottomrule
\end{tabularx}
\end{table}

\section{Related Work}
\label{sec:related}

\paragraph{AI Agent Governance.}
Agent Contracts~\cite{agent-contracts} formalizes a contract framework $C = (I, O, S, R, T, \Phi, \Psi)$ for resource-bounded agents, achieving 90\% token savings. Contracts define what agents \emph{may} do; \PunkGo{} records what they \emph{actually did}---the two are complementary. Infrastructure for AI Agents~\cite{infra-agents} proposes agent identity and attribution infrastructure but targets inter-agent interaction, not user-side verification on personal hardware. OpenAI Governance Practices~\cite{openai-governance} enumerates seven desirable properties (readability, monitoring, attributability, interruptibility, etc.) that presuppose persistent recording infrastructure but do not specify one.

\paragraph{Agent Access Control.}
AgentSentry~\cite{agentsentry} generates minimal ephemeral permissions; AgentBound~\cite{agentbound} provides MCP access control. Both enforce policy at invocation time but neither produces persistent tamper-evident records of what was actually executed.

\paragraph{Transparent Logs and Verifiable Data Structures.}
Certificate Transparency~\cite{rfc6962} established the canonical Merkle tree append-only log paradigm. Crosby and Wallach~\cite{crosby-wallach} formalized tamper-evident log data structure semantics. A2M~\cite{a2m} embeds an append-only primitive in the TCB---the \PunkGo{} kernel serves an analogous architectural role but targets AI agents rather than Byzantine consensus. CONIKS~\cite{coniks} extended transparent logs to end-user key verification. Google's Trillian~\cite{trillian} provides engineering-scale validation of the paradigm across multiple applications. \PunkGo{} applies this transparent log paradigm to a new domain: OS-level AI agent action auditing on personal hardware.

\paragraph{OS-Level Audit Systems.}
The Linux Audit Framework (auditd)~\cite{linux-audit} provides kernel-level syscall logging with configurable filtering rules and is widely deployed for compliance auditing (e.g., CAPP/EAL4+). However, auditd operates at the syscall granularity, not the agent-action level: it records \texttt{open()}, \texttt{write()}, \texttt{execve()} but cannot attribute these syscalls to a specific agent's semantic action (e.g., ``agent X modified document Y under envelope Z''). It also provides no cryptographic tamper evidence---logs are plain-text files that a root user can silently edit. Windows Event Tracing (ETW) shares these limitations. \PunkGo{} operates at a higher semantic level (agent actions rather than syscalls) and provides Merkle-based tamper evidence that auditd lacks.

\paragraph{eBPF-Based Runtime Auditing.}
eBPF~\cite{ebpf} extends the Linux kernel with sandboxed programs that attach to syscall and kernel events, enabling flexible runtime tracing and policy enforcement without kernel modification. Falco~\cite{falco} applies eBPF to container runtime security; AgentCgroup~\cite{agentcgroup} applies eBPF and cgroups to characterize AI agent resource dynamics. These tools operate at the syscall granularity---they observe \texttt{open()}, \texttt{execve()}, and cgroup counters, but cannot attribute syscall sequences to agent-level semantic actions or produce cryptographic tamper evidence. eBPF programs must terminate in bounded steps and cannot block for asynchronous human decisions, precluding mechanisms like \PunkGo{}'s hold workflow. Furthermore, eBPF is Linux-specific; \PunkGo{}'s user-space design is cross-platform by construction. The two approaches are complementary: eBPF provides fine-grained OS-level resource traces that could cross-validate agent-declared actions at the semantic layer.

\paragraph{Blockchain-Based Audit.}
Permissioned blockchains such as Hyperledger Fabric~\cite{hyperledger-fabric} provide tamper-evident distributed ledgers with ordering-service consensus, achieving 3{,}500+ tx/sec in multi-peer deployments. However, blockchain-based audit imposes significant architectural overhead for single-machine AI agent logging: Fabric requires multi-node consensus (even in crash-fault-tolerant mode), introduces network latency, and demands infrastructure far exceeding a personal workstation. For the target scenario---personal AI agents on local hardware---the trust model requires protecting a single user's history from untrusted agents, not achieving consensus among mutually distrusting organizations. \PunkGo{} adopts the cryptographic structure (Merkle trees) without the consensus overhead.

\paragraph{TEE-Based Attestation.}
Graphene-SGX (now Gramine)~\cite{graphene-sgx} demonstrates that unmodified applications can run inside Intel SGX enclaves with modest overhead, providing confidentiality and integrity against a malicious OS. TEE-based execution is complementary to \PunkGo{}: running the kernel inside an SGX/TDX enclave would elevate tamper-evidence guarantees from Level~1 (untrusted agents) to Level~2 (privileged platform adversaries), as discussed in \S\ref{sec:threat}. This integration is listed as future work.

\paragraph{OS-Level Security and Formal Verification.}
seL4~\cite{sel4} proved OS kernel functional correctness and capability isolation---but seL4 proves what agents \emph{cannot} do; it does not record what they \emph{did}. Capsicum~\cite{capsicum} brought practical capability security to UNIX---but without history recording. CertiKOS~\cite{certikos} extends verified kernel construction to concurrent settings with compositional layered specifications. IronFleet~\cite{ironfleet}'s three-layer refinement verification methodology (TLA+-style specification $\to$ protocol layer $\to$ implementation) serves as the template for \PunkGo{}'s formalization approach. AgentCgroup~\cite{agentcgroup} first characterized OS-level resource dynamics of AI agents---but without a security model or audit mechanism. Verus~\cite{verus} provides a Rust verification framework and is the target tool for \PunkGo{}'s future machine-checked proofs.

\paragraph{Agent Operating Systems.}
AIOS~\cite{aios} is the leading agent OS, providing scheduling, context management, and basic access control with 2.1$\times$ faster agent execution. Its access control uses a permission-group hashmap---functional for resource management but without audit logging, Merkle evidence, or capability-based isolation. The companion vision paper~\cite{llm-as-os} positions the LLM as an OS-level scheduler and resource manager. The characterization of ``LLM as TCB'' in Table~\ref{tab:aios} reflects the observation that AIOS's security boundary does not extend below the LLM scheduling layer, not a claim that AIOS deliberately places security trust in LLMs. The two systems address complementary concerns: AIOS optimizes agent execution; \PunkGo{} provides verifiable action recording.

\paragraph{Human Oversight and AI Safety.}
The Off-Switch Game~\cite{off-switch} proves utility-maximizing agents are incentivized to disable human intervention. Corrigibility~\cite{corrigibility} formalizes cooperation with correction. The Oversight Game~\cite{oversight-game} models oversight as a dynamic Markov game. These works establish theoretical foundations but lack implementation infrastructure---specifically, a verifiable record of \emph{when} humans oversaw, \emph{what} they decided, and \emph{whether} agents complied. \PunkGo{}'s hold mechanism and its Merkle-logged events are precisely the infrastructure these theoretical models assume but do not specify.

\paragraph{Sovereignty and Local-First.}
Local-First Software~\cite{local-first} establishes data ownership principles. Solid~\cite{solid} provides data storage location control. Sovereign Agents~\cite{sovereign-agents} analyzes how agents maintain sovereignty from humans. \emph{This paper argues the question points in the wrong direction}: in the context of AI agents acting on behalf of humans, what requires protection is human sovereignty over agents. The Right to History extends local-first principles from data ownership to computational history ownership.

\paragraph{Data Provenance and Trust.}
Provenance in databases~\cite{data-provenance} provides formal foundations for tracking the origin and transformation of data. Trust and reputation models for multi-agent systems~\cite{spice} establish frameworks for evaluating agent behavior based on historical records. \PunkGo{} provides the infrastructure layer that such provenance and trust systems require: a tamper-evident, independently verifiable record of agent actions, upon which provenance queries and reputation assessments can be built.

\section{Discussion and Future Work}
\label{sec:discussion}

\paragraph{Limitations.}
(1)~The current design uses a single-node, single-committer architecture with no multi-writer concurrency. This is deliberate: single-point linearization yields a deterministic total-order log, avoiding consensus protocol complexity. Distributed extension requires introducing consensus (e.g., Raft) and re-proving invariants under multi-committer semantics.
(2)~Formal arguments are structured proof sketches, not machine-checked.
(3)~Merkle proof generation exhibits $O(n)$ time due to loading all hash nodes from storage before computing the $O(\log n)$ proof path. At 10{,}000 entries proof generation takes $\sim$19\,ms; an attempted 100{,}000-entry benchmark was impractical due to $O(n^2)$ cumulative fill cost (each commit triggers $O(n)$ hash loading). Paginated or incremental hash loading would restore $O(\log n)$ generation time and is the critical scaling optimization.
(4)~The threat model assumes kernel code is trusted.
(5)~Throughput of $\sim$400 actions/sec is sufficient for personal agent usage but is bottlenecked by per-commit storage I/O (event persistence and Merkle checkpoint); batched commits or write-ahead buffering could improve throughput for higher-volume scenarios.
(6)~Hardware compute detection (INT8 TOPS) is currently configuration-file only; runtime probing is not implemented.
(7)~Execution environment isolation is the actor's responsibility; the kernel provides recording but not execution containment. Reference sandbox implementations for actors are future work.

\paragraph{Privacy Considerations and the Right to Erasure.}
An append-only action log creates an inherent tension with the right to erasure under data protection regulations~\cite{gdpr}. Three mitigating factors apply. First, in the primary deployment scenario---personal hardware, single user---the data controller and data subject are the same person; self-processing falls outside the material scope of most data protection frameworks (e.g., GDPR's household exemption). Second, for organizational deployments, the architecture supports \emph{crypto-erasure}: log entries can be encrypted with per-entity keys, and key destruction renders ciphertext unrecoverable while preserving the Merkle tree's structural integrity (hash nodes commit to ciphertext, not plaintext). Implementation of crypto-erasure is future work. Third, for high-risk AI systems, regulatory logging mandates (e.g., EU AI Act Article~12~\cite{eu-ai-act}) may create a legal basis that overrides erasure requests. A complete legal analysis of this interplay is beyond the scope of this paper.

\paragraph{Trust Boundary of Action Recording.}
The kernel records \emph{what agents claim to have done}, not necessarily \emph{what actually happened in the external world}. For the \texttt{execute} action type, the actor performs the execution externally and submits a structured payload (input/output OIDs, exit code, artifact hash). The kernel validates payload format and OID syntax, but does not verify whether OID references point to actual content or whether declared results correspond to actual execution outcomes. This is an explicit design choice, analogous to Certificate Transparency: CT log servers record that a Certificate Authority \emph{claims} to have issued a certificate, without verifying the certificate's content correctness~\cite{rfc6962}. The value lies in the tamper-evident \emph{attribution record}---who claimed what, when, under what authorization---which supports after-the-fact audit and dispute resolution. For use cases requiring stronger guarantees (e.g., verifying that an agent's declared file modification matches the actual file state), content verification at the actor's execution environment can be added as an extension without modifying the kernel.

\paragraph{Future Work.}
(1)~\textbf{Machine-checked verification}: End-to-end proofs from TLA+ specification to Rust code using Verus~\cite{verus}.
(2)~\textbf{TEE integration}: Running the kernel inside Intel SGX/TDX to reduce trust assumptions.
(3)~\textbf{Multi-agent collaboration}: A2A protocol integration for multi-agent scenarios.
(4)~\textbf{External tool ecosystem}: Modern AI agent frameworks increasingly adopt standardized tool-lifecycle protocols (e.g., MCP) that allow agents to discover, invoke, and manage external tools automatically. Integrating such protocols would enable \PunkGo{} to record tool invocations as auditable \texttt{execute} events, bridging the gap between automated tool orchestration and verifiable action history.
(5)~\textbf{Cross-organization verification}: A public Merkle proof verification service analogous to CT log servers.
(6)~\textbf{Performance optimization}: Paginated or incremental hash loading for Merkle proof generation (eliminating $O(n)$ proof-time overhead) and batched commit modes for higher throughput.
(7)~\textbf{Actor-side execution isolation}: Reference sandbox implementations (Docker, Firecracker microVM) for actors to use before submitting execute results to the kernel.
(8)~\textbf{Syscall-level cross-validation}: On Linux, eBPF-based syscall tracing~\cite{ebpf} could cross-validate agent-declared actions against actual system call sequences, though this requires a semantic-to-syscall attribution model that is an independent research problem.
(9)~\textbf{Hardware probing}: Runtime INT8 TOPS auto-detection replacing configuration files.

\section{Conclusion}
\label{sec:conclusion}

As AI agents increasingly act on behalf of humans, a fundamental question emerges: who records what happened? This paper defines the \emph{Right to History}---the principle that individuals are entitled to a complete, verifiable, tamper-evident record of every AI agent action---and provides a formal framework to realize it. The paper argues five system invariants (Append-Only, Completeness, Integrity, Boundary Enforcement, and Energy Conservation) via structured proof sketches, forming a security chain from behavior isolation to verifiable auditing. The \PunkGo{} Kernel implements this framework as a Rust sovereignty kernel, unifying the cryptographic auditing of Certificate Transparency, the capability isolation spirit of seL4, and an energy governance model into one system. Evaluation confirms all invariants hold under adversarial testing, with sub-1.3\,ms action latency and logarithmic-size Merkle proofs.

The Right to History is not an optional feature layered atop an existing system; it is a foundational architectural commitment---the event log and its invariants determine the system's structure, not the reverse. \PunkGo{} aims to encourage the systems community to treat verifiable action recording as a first-class concern in AI agent infrastructure.

\paragraph{Data and Code Availability.}
The \PunkGo{} Kernel source code, benchmark scripts, and all raw benchmark data reported in this paper are available at \url{https://github.com/PunkGo/punkgo-kernel} under the MIT License. Project homepage: \url{https://punkgo.ai}.

\bibliographystyle{plain}

\end{document}